%
%
%
%
%
%
%
\documentclass[%
 reprint,
 amsmath,amssymb,
 aps,
pra,
]{revtex4-2}

\usepackage{graphicx}
\usepackage{dcolumn}
\usepackage{amsthm}
\usepackage{enumitem}
\usepackage{bm}
\usepackage{hyperref}
\usepackage{physics}
\usepackage{xcolor}
\usepackage{graphicx}



\newtheorem{theorem}{Theorem}
\newtheorem{lemma}{Lemma}

\newcommand{\cB}{\mathcal{B}}
\newcommand{\cD}{\mathcal{D}}
\newcommand{\cE}{\mathcal{E}}

\newcommand{\cH}{\mathcal{H}}
\newcommand{\cM}{\mathcal{M}}

\newcommand{\cT}{\mathcal{T}}

\begin{document}

\preprint{APS/123-QED}

\title{Theory of mirror benchmarking and demonstration on a quantum computer}

\author{Karl Mayer}
\author{Alex Hall}
\author{Thomas Gatterman}
\author{Si Khadir Halit}
\author{Kenny Lee}
\author{Justin Bohnet}
\author{Dan Gresh}
\author{Aaron Hankin}
\author{Kevin Gilmore}
\author{Justin Gerber}
\author{John Gaebler}

\affiliation{Honeywell Quantum Solutions}
\date{\today}

\begin{abstract}
A new class of protocols called mirror benchmarking was recently
proposed to measure the system-level performance of quantum
computers.
These protocols involve circuits with random sequences of gates followed by mirroring,
that is, inverting each gate in the sequence.
We give a simple proof that mirror
benchmarking leads to an exponential decay of 
the survival probability with sequence length,
under the uniform noise assumption,
provided the twirling group forms a 2-design.
The decay rate is determined by a quantity that 
is a quadratic function of the error channel,
and for certain types of errors is equal to the unitarity.
This result yields a new method for estimating the
coherence of noise.
We present data from mirror benchmarking experiments
run on the Honeywell System Model H1.
This data constitutes a set of performance curves,
indicating the success probability for random
circuits as a function of qubit number and circuit depth.

\end{abstract}

\maketitle

\section{Introduction}\label{sec:level1}

Quantum computers are expected to someday solve
problems that are intractable for classical computers,
but what are the capabilities of quantum computers today?
How should the performance of quantum computers be measured?
Perhaps surprisingly, there is no agreed upon answer to these questions,
though a number of metrics have been proposed.
One such metric, the quantum volume (QV), is given by
the maximum number of qubits for which a class of random 
circuits produces a distribution that passes a certain statistical test~\cite{Cross2019}.
Other groups have proposed to use textbook algorithms such
as Bernstein-Vazirani or Grover search,
which ideally produce deterministic outputs, and report the
algorithmic success probability~\cite{Wright2019}.
Other proposals, inspired by randomized benchmarking (RB)~\cite{Knill2008, Magesan2011},
employ random Clifford circuits of varying length, which can be inverted
in order to measure the survival probability~\cite{Proctor2019}.
There are advantages and disadvantages to each of these
system-level benchmarks,
which have been discussed elsewhere~\cite{Blume-Kohout2020}.

In this paper, we focus on a protocol referred to
as mirror benchmarking, of which different versions were proposed in~\cite{Proctor2020}.
In the variation we consider, the first half of a circuit consists
of a sequence of randomly chosen layers,
and the second half applies the inverse of each
layer in reverse order,
so that the final state is ideally equal to the initial state.
Additionally, random Pauli gates
can be inserted between layers to remove coherent effects,
similar to the randomized compiling procedure~\cite{Wallman2016}.
The output of mirror benchmarking is a plot
of the survival probability as a function of
the sequence length. This can be
useful to a quantum computer user as it provides a 
direct measure of circuit performance for a given qubit number and circuit depth.

The authors of~\cite{Proctor2020} provide substantial motivation for why
mirror benchmarking is a useful benchmark.
They also advocate for the use of mirroring broadly,
to benchmark any circuit of interest,
rather than just random circuits as described here.
Restricting our attention to random circuits,
our main contribution is a formula for how the
survival probability decays as a function of sequence length,
in analogy to RB protocols.
We show that under assumptions commonly made in RB research,
the survival probability decays exponentially.
Under some additional assumptions on the noise,
the decay parameter is 
the unitarity of the average error per circuit layer.
The unitarity was introduced in~\cite{Wallman2015} and is a measure of the coherence of noise.
A consequence of this result is that besides being a useful
benchmark for system-level performance,
mirror benchmarking provides a new method to estimate
the unitarity of noise, for instance of a two-qubit gate.
This requires fewer circuits than the protocol of~\cite{Wallman2015}.

We implement mirror benchmarking
on the Honeywell System Model H1 quantum computer.
In our implementation, 
each layer consists of a random single-qubit Clifford gate on each qubit,
followed by a native two-qubit gate applied to each qubit pair
in a random fully-connected pairing of qubits.
We run mirror benchmarking on $n=6,8,10$ qubits.
For the largest circuits run in this experiment,
with $n=10$ and sequence length $L=16$, corresponding to a two-qubit circuit depth of $32$
and $160$ total two-qubit gates,
we obtain an average survival probability of $0.344(18)$.

The outline of this paper is as follows.
In Sec.~\ref{sec: 2} we give a simple proof
that the survival probability in mirror
benchmarking decays with a rate determined by the unitarity.
In Sec.~\ref{sec: 3} we address the question
of how accurately the direct sampling of gates used in our experiment approximates a
true uniform sampling over a 2-design.
In Sec.~\ref{sec: 4} we describe our
implementation of mirror benchmarking on the 
Honeywell System Model H1 quantum computer and present the results.
Finally, Sec.~\ref{sec: 5} contains a discussion and conclusion.

\section{Mirror benchmarking survival probability}\label{sec: 2}

In a mirror benchmarking protocol,
a sequence of unitaries $g_1,\dots, g_L$,
followed by their inverses $g_L^{-1}, \dots, g_1^{-1}$,
is applied to a noisy initial state,
given by density matrix $\rho$.
For the present discussion, the $g_i$ may be individual gates, 
or more complicated sequences of gates,
but we assume that they are sampled uniformly at random from a group $G$ that forms a unitary 2-design~\cite{Dankert2009}.
Let $\cE$ be a constant error channel associated with each unitary in the sequence.
We assume that $\cE$ is unital, that is, $\cE(I)=I$,
but we discuss how to relax this assumption in the appendix.
We also assume that the error channel per unitary 
during the inversion half of the circuit
is given by $\cE^{\dagger}$, the dual of $\cE$,
which is defined by $\langle\langle\cE^{\dagger}(M)|N\rangle\rangle=\langle\langle M|\cE(N)\rangle\rangle$, where $\langle\langle M| N\rangle\rangle:=\Tr(M^{\dagger}N)$.
We postpone a discussion of this assumption to the end of this section.

At the end of the circuit a measurement is performed 
that is described by a noisy POVM with elements $\{E, I-E\}$.
The average survival probability for a sequence of length $L$ is then given by
\begin{equation}
    p(L) = \frac{1}{|G|^L}\sum_{g_1,\dots, g_L}\langle\bra{E}\cE^{\dagger} g_1^{-1}\cdots\cE^{\dagger} g_L^{-1}\cE g_L\cdots\cE g_1\ket{\rho}\rangle,
\end{equation}
where by abuse of notation we use $g$ to denote both
a unitary and its corresponding superoperator.
We can absorb the final $\cE^{\dagger}$ into the noisy POVM element $E$,
and after defining $g_{i'}= g_i\cdots g_2 g_1$
and relabling $g_{i'}\mapsto g_i$,
the survival probability can be written as
\begin{multline}\label{eq: surv prob 2}
    p(L) = \frac{1}{|G|^L}\sum_{g_1,\dots g_L\in G}\langle\bra{E}(g_1^{-1}\cE^{\dagger} g_1)
    \cdots(g_{L-1}^{-1}\cE^{\dagger}g_{L-1})\\
    (g_L^{-1}\cE g_L)\cdots(g_1^{-1} \cE g_1)\rangle\ket{\rho}\rangle.
\end{multline}
An important quantity is the twirl of $\cE$ over $G$, which is defined as
\begin{equation}
    \cE_T=\frac{1}{|G|}\sum_{g\in G}g^{-1}\cE g.
\end{equation}
A useful fact is that $\cE_T$ commutes with $g$ for all $g\in G$,
which can be shown as follows:
\begin{align}\label{eq: twirl prop}
    g\,\cE_T &= \frac{1}{|G|}\sum_{h\in G}gh^{-1}\cE h\notag\\
    &= \frac{1}{|G|}\sum_{h\in G}(hg^{-1})^{-1}\cE (hg^{-1})g\notag\\
    &= \frac{1}{|G|}\sum_{h'\in G}(h')^{-1}\cE h'g\notag\\
    &= \cE_T g,
\end{align}
where in the third line we set $h'=hg^{-1}$,
and used the group transitivity property to
change the index of summation from $h$ to $h'$.

Summing over $g_L$ in Eq.~\eqref{eq: surv prob 2},
and using Eq.~\eqref{eq: twirl prop}, the central terms in the product in Eq.~\eqref{eq: surv prob 2} are then
\begin{equation}\label{eq: surv prob terms}
    (g_{L-1}^{-1}\cE^{\dag}g_{L-1})\,\cE_{T}\,(g_{L-1}^{-1}\cE g_{L-1}) = g_{L-1}^{-1}\cE^{\dag}\cE_T\cE g_{L-1}.
\end{equation}

Summing the right hand side of Eq.~\eqref{eq: surv prob terms} over $g_{L-1}$,
\begin{equation}
\frac{1}{|G|}\sum_{g_{L-1}} g_{L-1}^{-1}\cE^{\dag}\cE_T\cE g_{L-1} = (\cE^{\dag}\cE_T\cE)_T.
\end{equation}
Continuing in this way,
the general pattern for the survival probability at sequence length $L$ can be described recursively as follows.
Let $\cT_l$ for $l=1,2,\dots$ be the sequence of
operators defined by $\cT_1=\cE_T$ and $\cT_{l+1}=(\cE^{\dagger}\,\cT_l\,\cE)_T$. Then
\begin{equation}\label{eq: surv prob}
    p(L) = \langle\bra{E}\cT_L\ket{\rho}\rangle.
\end{equation}

We will derive an expression for $\cT_L$. First we
recall some facts about twirls over 2-designs~\cite{Dankert2009}.
Let $\cB(\cH)$ denote the vector space of linear operators
on Hilbert space $\cH=\mathbb{C}^d$. Let $V_1=\mathrm{span}\{I\}$ and $V_2=\{A\in\cB(\cH): \Tr(A)= 0\}$,
and note that $\cB(\cH)=V_1\bigoplus V_2$.
Let $\cM:\cB(\cH)\to\cB(\cH)$
and suppose that $G$ is a 2-design. Then the twirl
of $\cM$ over $G$ is a linear combination of
two projectors:
\begin{equation}\label{eq: 2 projectors}
    \cM_T = a\Pi_1 + b \Pi_2,
\end{equation}
where $\Pi_1$ and $\Pi_2$ are the projectors onto $V_1$ and $V_2$, respectively
(A proof of this fact that does not rely on
representation theory is found in~\cite{Nielsen2002}).
If $\cM$ is trace-preserving (TP), then $\cM_T$ is also TP, and therefore
\begin{equation}
    \Tr(I)=\Tr(\cM_T(I))=a\Tr(\Pi_1 I)=a\Tr(I),
\end{equation}
which implies that $a=1$.
Let $D=\dim(V_2)=d^2-1$.
The quantity $b$ is given by
\begin{align}\label{eq: b formula}
    b &= \frac{1}{D}\Tr(\Pi_2\cM_T)\notag\\
    &= \frac{1}{D}\frac{1}{|G|}\sum_{g\in G}\Tr(g\Pi_2 g^{-1}\cM)\notag\\
    &= \frac{1}{D}\Tr(\Pi_2\cM),
\end{align}
where in the last equality we used the
fact that $\Pi_2$ commutes with the action of $G$.
Given an error channel $\cE$, define
\begin{equation}
    f(\cE) = \frac{1}{D}\Tr(\Pi_2\cE).
\end{equation}
The quantity $f(\cE)$ is related to $F(\cE)$, the
process fidelity (also called entanglement fidelity) with respect to the identity~\cite{Nielsen2002},
according to
\begin{equation}
    1-f(\cE) = \frac{d^2}{D}\big(1-F(\cE)\big).
\end{equation}
We will simply write $f$ when the channel $\cE$
is clear from context.
By Eqs.~\eqref{eq: surv prob}-\eqref{eq: b formula}, we have
\begin{equation}\label{eq: twirl}
    \cE_T = \Pi_1 + f\Pi_2.
\end{equation}
The unitarity of $\cE$, which was introduced in~\cite{Wallman2015}, is given by
\begin{equation}\label{eq: unitarity}
    u = \frac{1}{D}\Tr(\Pi_2\cE^{\dagger}\Pi_2\cE).
\end{equation}
We are now ready to prove a formula
for the survival probability.
\begin{lemma}
Let $\cT_l$ for $l=1,2,\dots$ be a sequence of
operators defined by $\cT_1=\cE_T$, and $\cT_{l+1}=(\cE^{\dagger}\cT_l\cE)_T$. Then for all $l$,
\begin{equation*}
    \cT_l = \Pi_1 + fu^{l-1}\Pi_2.
\end{equation*}
\end{lemma}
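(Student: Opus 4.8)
The plan is to prove the claim by induction on $l$, with the base case $l=1$ following immediately from Eq.~\eqref{eq: twirl}, which states $\cE_T = \Pi_1 + f\Pi_2$, matching the formula with $l-1 = 0$. For the inductive step, assume $\cT_l = \Pi_1 + fu^{l-1}\Pi_2$ and compute $\cT_{l+1} = (\cE^{\dagger}\cT_l\cE)_T$. Substituting the inductive hypothesis, I would expand
\begin{equation*}
\cE^{\dagger}\cT_l\cE = \cE^{\dagger}\Pi_1\cE + fu^{l-1}\,\cE^{\dagger}\Pi_2\cE,
\end{equation*}
and then apply the twirl, using linearity of $\cM\mapsto\cM_T$ together with the structure result Eq.~\eqref{eq: 2 projectors} applied to each of the two pieces.

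The first piece requires evaluating $(\cE^{\dagger}\Pi_1\cE)_T$. Since $\cE$ is unital ($\cE(I)=I$) and $\Pi_1(M) = \frac{1}{d}\Tr(M)I$, one checks that $\cE^{\dagger}\Pi_1\cE = \Pi_1$: indeed $\Pi_1\cE(M)$ sends $M$ to $\frac{1}{d}\Tr(\cE(M))I$, and since $\cE$ is trace-preserving this is $\frac{1}{d}\Tr(M)I = \Pi_1(M)$; then $\cE^{\dagger}\Pi_1(M) = \cE^{\dagger}\!\big(\frac{1}{d}\Tr(M)I\big) = \frac{1}{d}\Tr(M)\cE^{\dagger}(I)$, and unitality of $\cE$ gives $\cE^{\dagger}(I)=I$ as well (since $\langle\langle\cE^{\dagger}(I)|N\rangle\rangle = \langle\langle I|\cE(N)\rangle\rangle = \Tr(\cE(N)) = \Tr(N) = \langle\langle I|N\rangle\rangle$), so this equals $\Pi_1(M)$. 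Hence $(\cE^{\dagger}\Pi_1\cE)_T = (\Pi_1)_T = \Pi_1$, since $\Pi_1$ is already invariant under the twirl.

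For the second piece, I apply Eq.~\eqref{eq: 2 projectors} to $\cM = \cE^{\dagger}\Pi_2\cE$, writing $(\cE^{\dagger}\Pi_2\cE)_T = a'\Pi_1 + b'\Pi_2$. The coefficient $a'$ vanishes: $\cE^{\dagger}\Pi_2\cE$ annihilates $I$ because $\Pi_2\cE(I) = \Pi_2(I) = 0$, so applying the TP-style argument (or directly, $a' = \frac{1}{d}\Tr(\cE^{\dagger}\Pi_2\cE(I)/\text{normalization})$) shows the $\Pi_1$ component is zero; more carefully, $a'$ is determined by the action on $I$, and $(\cE^{\dagger}\Pi_2\cE)(I) = 0$ forces $a'=0$. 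The coefficient $b'$ is computed exactly as in Eq.~\eqref{eq: b formula}: $b' = \frac{1}{D}\Tr\big(\Pi_2\,\cE^{\dagger}\Pi_2\cE\big) = u$, by the definition of the unitarity in Eq.~\eqref{eq: unitarity}. Combining, $\cT_{l+1} = \Pi_1 + fu^{l-1}\cdot u\,\Pi_2 = \Pi_1 + fu^{l}\Pi_2$, completing the induction.

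The main obstacle is handling the $\Pi_1$ bookkeeping cleanly — specifically, verifying that unitality of $\cE$ propagates to $\cE^{\dagger}$ being trace-preserving (equivalently $\cE$ trace-preserving implies $\cE^{\dagger}$ unital) so that the $\Pi_1$-block coefficient stays pinned at $1$ at every step and the cross terms between $\Pi_1$ and $\Pi_2$ genuinely vanish. Once that is established, the recursion for the $\Pi_2$-block is a one-line application of the definition of the unitarity, and the geometric accumulation of the factor $u$ is automatic.
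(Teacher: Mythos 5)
Your proposal is correct and takes essentially the same route as the paper's proof: induction on $l$ with base case Eq.~\eqref{eq: twirl}, then applying the 2-design twirl structure of Eqs.~\eqref{eq: 2 projectors} and~\eqref{eq: b formula} in the inductive step, with the $\Pi_1$-block pinned and the cross term vanishing so that the $\Pi_2$ coefficient picks up a factor $u=\frac{1}{D}\Tr(\Pi_2\cE^{\dagger}\Pi_2\cE)$ each step (you split $\cE^{\dagger}\cT_l\cE$ by linearity and treat the two pieces separately, whereas the paper twirls the combined map at once, but this is only bookkeeping). One cosmetic slip: $\cE^{\dagger}(I)=I$ follows from trace preservation of $\cE$ (as your own displayed computation shows), not from unitality; unitality of $\cE$ is what you correctly invoke to make $(\cE^{\dagger}\Pi_2\cE)(I)=0$ and hence kill the $\Pi_1$ component of the second piece.
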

\begin{proof}
We proceed by induction. The base case is given by Eq.~\eqref{eq: twirl}.
Suppose the statement is true for a particular $l$.
Then
\begin{align}
    \cT_{l+1} &= (\cE^{\dagger}(\Pi_1 + fu^{l-1}\Pi_2)\cE)_T\notag\\
    &= \Pi_1 + \frac{1}{D}\Tr\big(\Pi_2(\cE^{\dagger}(\Pi_1+fu^{l-1}\Pi_2)\cE)\big)\Pi_2,
\end{align}
where in the second equality we used Eqs.~\eqref{eq: 2 projectors} and~\eqref{eq: b formula}
with $\cM=\cE^{\dagger}(\Pi_1 + fu^{l-1}\Pi_2)\cE$.
Since $\cE$ is trace preserving,
$\Pi_1\cE\Pi_2=0$, from which it follows that
\begin{equation}
    \Tr(\Pi_2\cE^{\dagger}\Pi_1\cE)=\Tr(\cE^{\dagger}\Pi_1\cE\Pi_2)=0.
\end{equation}
Therefore,
\begin{align}
    \cT_{l+1} &= \Pi_1 + \frac{1}{D}fu^{l-1}\Tr(\Pi_2\cE^{\dagger}\Pi_2\cE)\Pi_2\notag\\
    &= \Pi_1 + fu^l\Pi_2.
\end{align}
\end{proof}
\begin{theorem}
The survival probability for mirror benchmarking is given by
\begin{equation}
    p(L) = A f u^{L-1} + B,
\end{equation}
for constants $A$ and $B$
that depend only on the state prep and measurement.
\end{theorem}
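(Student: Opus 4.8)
The plan is to combine the recursive formula from Eq.~\eqref{eq: surv prob} with the closed form for $\cT_L$ just established in the Lemma. First I would substitute $\cT_L = \Pi_1 + f u^{L-1}\Pi_2$ into $p(L) = \doublebra{E}\cT_L\ket{\rho}\rangle$, using linearity of the Hilbert--Schmidt pairing to split it as
\begin{equation}
    p(L) = \doublebra{E}\Pi_1\ket{\rho}\rangle + f u^{L-1}\doublebra{E}\Pi_2\ket{\rho}\rangle.
\end{equation}
Then I would simply define $B = \doublebra{E}\Pi_1\ket{\rho}\rangle$ and $A = \doublebra{E}\Pi_2\ket{\rho}\rangle$, which immediately gives $p(L) = A f u^{L-1} + B$ in the claimed form.

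The remaining content is to verify that $A$ and $B$ depend only on the state preparation $\rho$ and the measurement effect $E$, which is manifest from these definitions since $\Pi_1$ and $\Pi_2$ are fixed projectors determined only by the dimension $d$; neither expression involves the error channel $\cE$ or the sequence length $L$. For concreteness I would note that $\Pi_1(\rho) = \Tr(\rho)\,I/d$, so $B = \Tr(E^{\dagger})\Tr(\rho)/d$, and since $\Pi_2 = \mathrm{id} - \Pi_1$ one gets $A = \doublebra{E}\rho\rangle\rangle - B = \Tr(E^{\dagger}\rho) - \Tr(E^{\dagger})\Tr(\rho)/d$. It is worth remarking that in the ideal noiseless case $\rho$ is a pure state, $E$ is the corresponding projector, and these reduce to $A + B = 1$ as expected for a perfect survival probability.

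Honestly there is no real obstacle here: the theorem is essentially a restatement of the Lemma once one identifies the SPAM-dependent constants, so the only thing to be careful about is bookkeeping — making sure the final $\cE^{\dagger}$ that was absorbed into $E$ earlier is accounted for (it is, since $E$ in the formula already denotes the effective measurement effect after that absorption), and noting that the exponent is $L-1$ rather than $L$ because $\cT_1 = \cE_T$ already contains one factor of $f$ but no factor of $u$. I would close by observing that this exponential-decay form is exactly analogous to standard randomized benchmarking, with the per-layer decay rate $fu$ replaced effectively by $u$ in the regime where $f$ contributes only an overall rescaling absorbed into $A$.
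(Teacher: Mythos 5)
Your proof is correct and is essentially identical to the paper's: both substitute the Lemma's closed form $\cT_L = \Pi_1 + f u^{L-1}\Pi_2$ into Eq.~\eqref{eq: surv prob} and read off $A=\doublebra{E}\Pi_2\ket{\rho}\rangle$, $B=\doublebra{E}\Pi_1\ket{\rho}\rangle$. The extra explicit SPAM formulas and the remark about the absorbed $\cE^{\dagger}$ are fine but not needed beyond what the paper states.
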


\begin{proof}
This follows directly from Lemma 1 and Eq.~\eqref{eq: surv prob} with
$A=\langle\bra{E}\Pi_2\ket{\rho}\rangle$
and $B=\langle\bra{E}\Pi_1\ket{\rho}\rangle$.
\end{proof}

A few comments are in order. The assumption that
the error channel during the inversion half of the 
circuit is $\cE^{\dag}$ implies that a unitary error
$U$ on $g_i$ will correspond to error $U^{-1}$ on $g_i^{-1}$.
This may not always hold but often can be enforced by a suitable compilation procedure.
For example, if $g_i=e^{-iZZ\theta}$ is subject to
a control error in $\theta$,
then $g_i^{-1}$ may be implemented by $Pe^{-iZZ\theta}P$, for any Pauli $P$ that anticommutes with $ZZ$.
If we relax this assumption,
and allow for an arbitrary error channel $\cE_{inv}$ on the inverse gates,
then the unitarity $u$ in Th.~1 is replaced with $\frac{1}{D}\Tr(\Pi_2\cE_{inv}\Pi_2\cE)$.

In the limit of perfectly coherent errors, that is $u=1$,
the survival probability does not decay at all,
since the coherent errors during the second half of the circuit
under these assumptions
exactly cancel those from the first half.
At the other extreme, when the error is a depolarizing channel,
$\cE=\Pi_1 + f\Pi_2$ and therefore $u=f^2$ and $p(L)=Af^{2L-1}+B$.
It is only in this special case that the decay parameter also extracts the fidelity of $\cE$.
If $\cE$ is a stochastic Pauli channel,
which we assume in our experimental implementation by the use of Pauli randomization in the protocol,
then $\cE=\cE^{\dag}$ is diagonal in the Pauli basis
of the superoperator representation.
In this case, the unitarity is no longer a measure of coherent errors, but is still defined by Eq.~\eqref{eq: unitarity}.
The process fidelity then is bounded in terms of the unitarity according to
\begin{equation}\label{eq: fid bounds}
    \frac{1}{d^2}\big(1+D\,u\big)\le F(\cE)\le \frac{1}{d^2}\big(1+D\sqrt{u}\big).
\end{equation}
A derivation of this bound is given in the appendix.

\section{Direct circuit sampling}\label{sec: 3}

Theorem 1 of the previous section assumed that the
group $G$ is a 2-design.
However, in the experiment described in the next section,
each layer unitary $g_i$ is sampled from a 
generating set for the $n$-qubit Clifford group $C_n$,
rather than from the full group.
This is done to avoid the need to compile a random
Clifford group element, and so that the sequence lengths of the decay curve more directly correspond
to two-qubit gate depth.
Thus, our implementation of mirror benchmarking is
analogous to the direct benchmarking protocol of~\cite{Proctor2019}.
Here, we justify the claim that Th.~1 should still approximately apply.

The problem of RB using a generating set for the 
Clifford group was studied in~\cite{meier2013}
as a Markov process on the Cayley graph of the Clifford group.
The $i$-th element of the sequence, $g_i$,
approximately samples from the group if an initial distribution
over the generating set diffuses to an approximately uniform distribution over the full graph.
We take a different approach, and simulate the distribution of unitaries $U^{(L)}=g_L\cdots g_2 g_1$
generated by sequences of length $L$,
with the gate set used in our experiment and
described in Sec.~\ref{sec: 4}.
A set of unitaries $\{U_i\}_i$ is a $t$-design if and 
only if
\begin{equation}\label{eq: t-design}
    \Phi_t:=\mathbb{E}_i\abs{\mathrm{Tr}(U_i)}^{2t}=t!\,,
\end{equation}
where the quantity on the left hand side is called the
$t$-th frame potential~\cite{Zhu2017}.
We numerically estimate the $2^{\mathrm{nd}}$ frame potential $\Phi_2$
for the set $\{U^{(L)}_i\}_i$ generated by
mirror benchmarking circuits of sequence length $L=2,4,\dots,16$ and qubit numbers $n=4,6,8$.
The results are shown in Fig.~\ref{fig 2}.
For each data point, we generated 10,000 random circuits 
and computed the quantity in Eq.~\eqref{eq: t-design},
using the standard error for the error bars.

The plot shows convergence to the 2-design value,
with faster convergence for smaller qubit number.
For short sequence lengths, typically $L<n$,
the gate set for direct mirror benchmarking circuits
fails to approximate a 2-design, and therefore fitting 
$p(L)$ to a single exponential decay should be 
understood as a heuristic benchmark, rather than an unbiased estimate of the unitarity.
An interesting and probably challenging open problem
would be to bound the error in $p(L)$ given by
Th.~1 as a function of the deviation from 2-design as measured by $\Phi_2$.

\begin{figure}[ht]
\centerline{\includegraphics[scale=0.55]{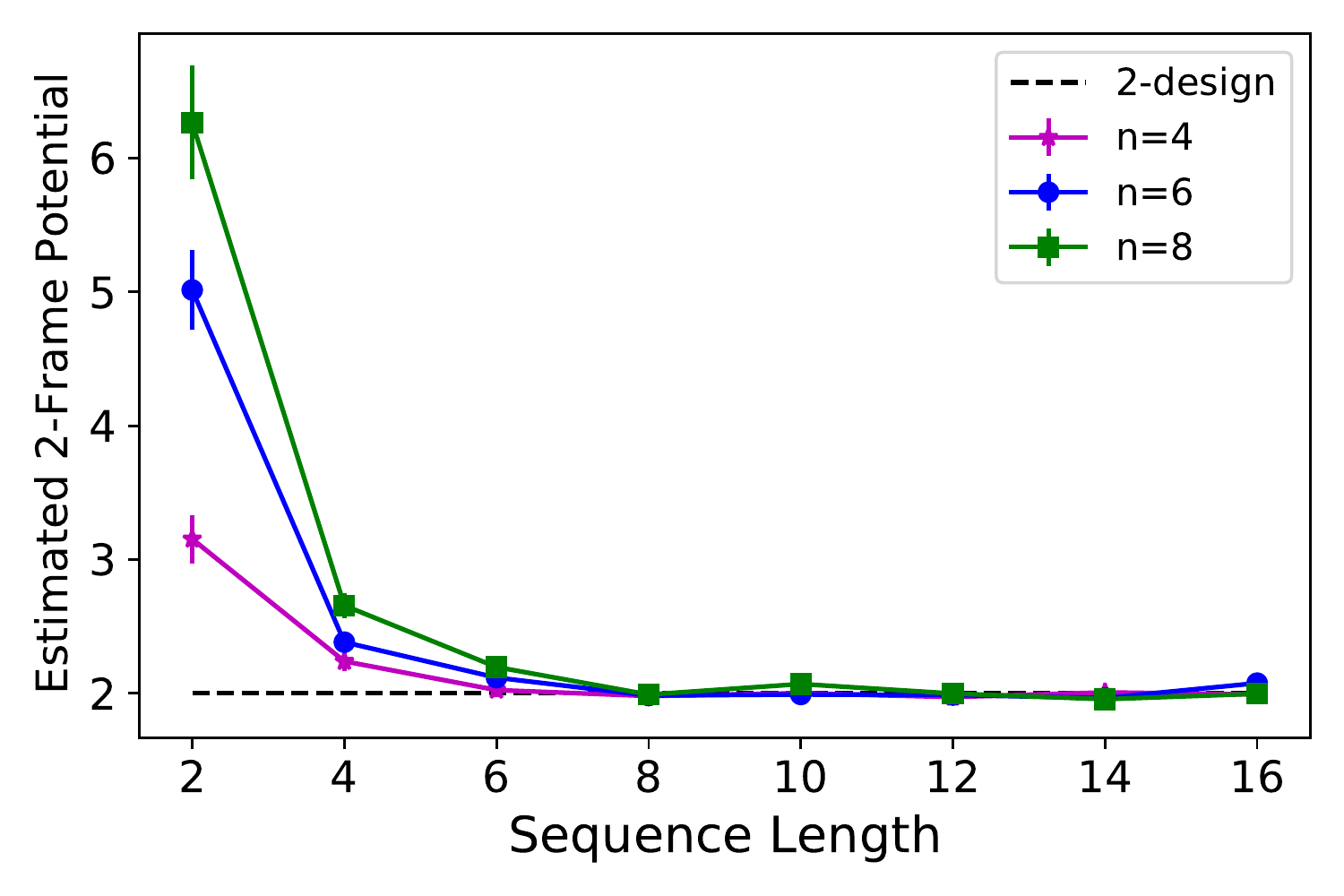}}
 \caption{Plot of $\Phi_2$ defined by Eq.~\eqref{eq: t-design} versus sequence length $L$ for $n=4,6,8$.
 For a $2$-design, $\Phi_2=2$. Each data point is
 estimated as the average over 10,000 random unitaries generated by the direct gate
 set described in Sec.~\ref{sec: 4}.}
 \label{fig 2}
\end{figure}

\section{Experiment}\label{sec: 4}

We implement mirror benchmarking on the
Honeywell System Model H1 quantum computer.
This system contains 10 trapped-ion qubits in a linear QCCD architecture described in~\cite{Pino2021, Tobey2020}.
The native two-qubit gate is the phase-insensitive
M\o{}lmer-S\o{}renson gate~\cite{Sorenson2000, Lee2005},
given by
\begin{equation}
    U_{zz} = e^{-iZZ\pi/4}.
\end{equation}
A circuit diagram of our implementation is shown in Fig.~\ref{fig: circuit diagram}.
Each $g_i$ is a circuit layer
with a random single-qubit Clifford gate applied to each qubit
followed by a native two-qubit gate applied to
each qubit pair in a random fully-connected
pairing of qubits.
Finally, we include Pauli randomization to prevent coherent errors
from systematically producing higher survival probabilities.
Random Pauli gates that multiply 
to the identity are inserted between each layer.
The second Pauli gate in each product is pushed through the subsequent layer using the 
commutation relations and combined with the next single-qubit gate.

\begin{figure}[ht]
\centerline{\includegraphics[scale=0.35]{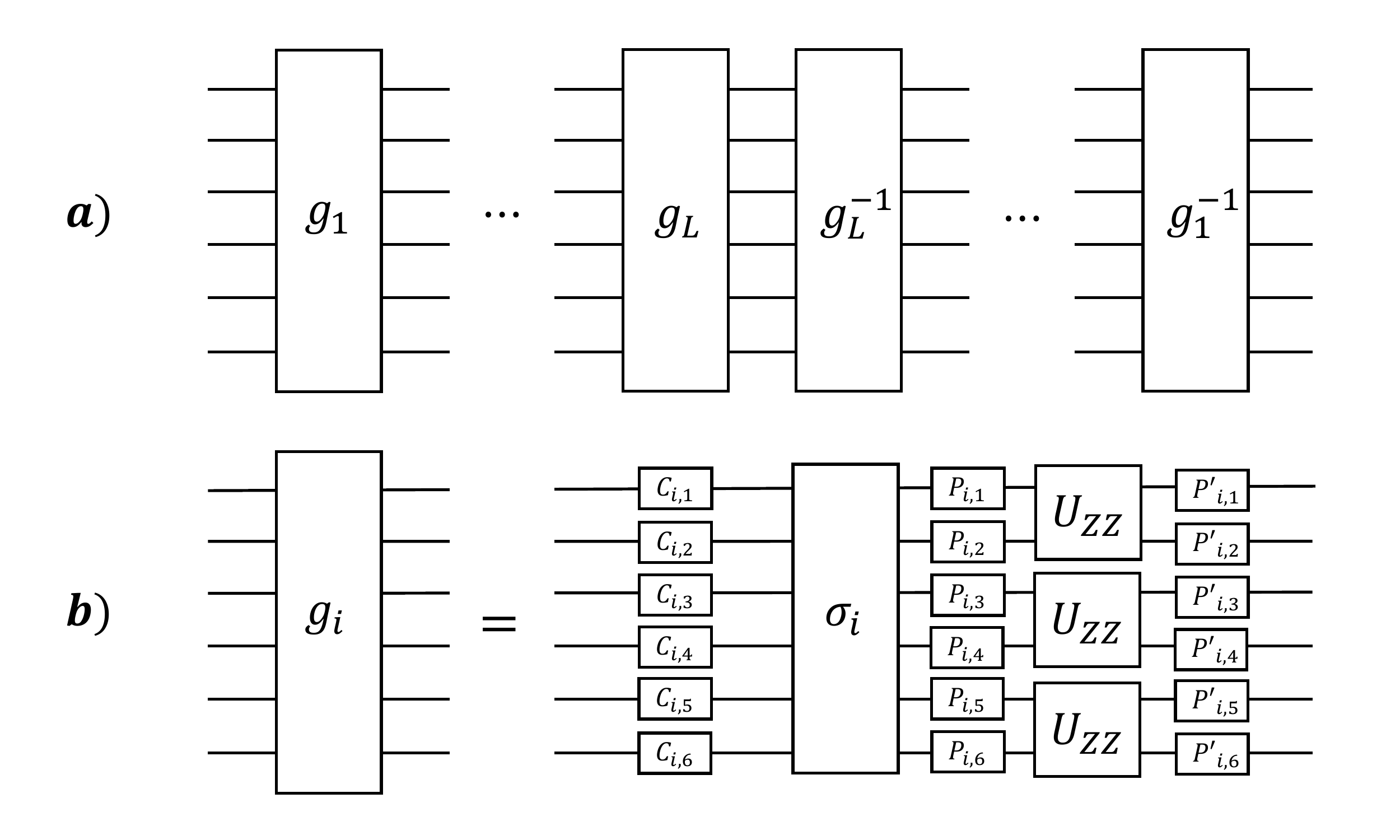}}
 \caption{
 Mirror benchmarking circuit diagram.
 \textbf{a)} General form for mirror benchmarking with $L$ layers,
 shown for $n=6$.
 \textbf{b)} The $i$-th layer unitary used in our implmentation.
 Random single-qubit Clifford gates $C_{i,j}$ are applied to qubit $j$.
 A random permutation $\sigma_i\in S_n$ is applied,
 and the native $U_{ZZ}$ gate is then applied to each qubit pair.
 Random Paulis $P_{i,j}$ are inserted before and pushed
 through the $U_{ZZ}$ gates to leave the circuit unitary
 unchanged. The Paulis $P'_{i,j}$ are combined with the
 subsequent round of Cliffords before execution on actual hardware.
 }
 \label{fig: circuit diagram}
\end{figure}

The results of our experiment are shown in Fig.~\ref{fig: decays}.
The figure shows the average $p(L)$ versus $L=4,8,12,16$ and with qubit number $n=6,8,10$.
We only use an even number of qubits,
since increasing to odd $n+1$ does not increase
the number of two-qubit gates in our circuits,
which are the leading source of circuit error.
For each sequence length, we sample 10 circuits,
and repeat each circuit for 100 shots.
We insert random Pauli gates before the measurement,
so that the constant $B$ can be fixed to $1/2^{n}$,
as described in~\cite{Harper2019}.
The error bars are computed by a bootstrapping procedure adapted from~\cite{meier2013}.
For each sequence length, the random circuits are resampled with 
replacement and then a parametric bootstrap is performed on the observed probability distribution for each circuit.
The unitarity estimate at $n=10$ is $u=0.938(4)$,
corresponding to process fidelity bounds of
$0.938(4)\le F(\cE)\le 0.969(2)$ per random fully-connected layer with $5$ two-qubit gates.

\begin{figure}[ht]
\centering
\includegraphics[scale=0.55]{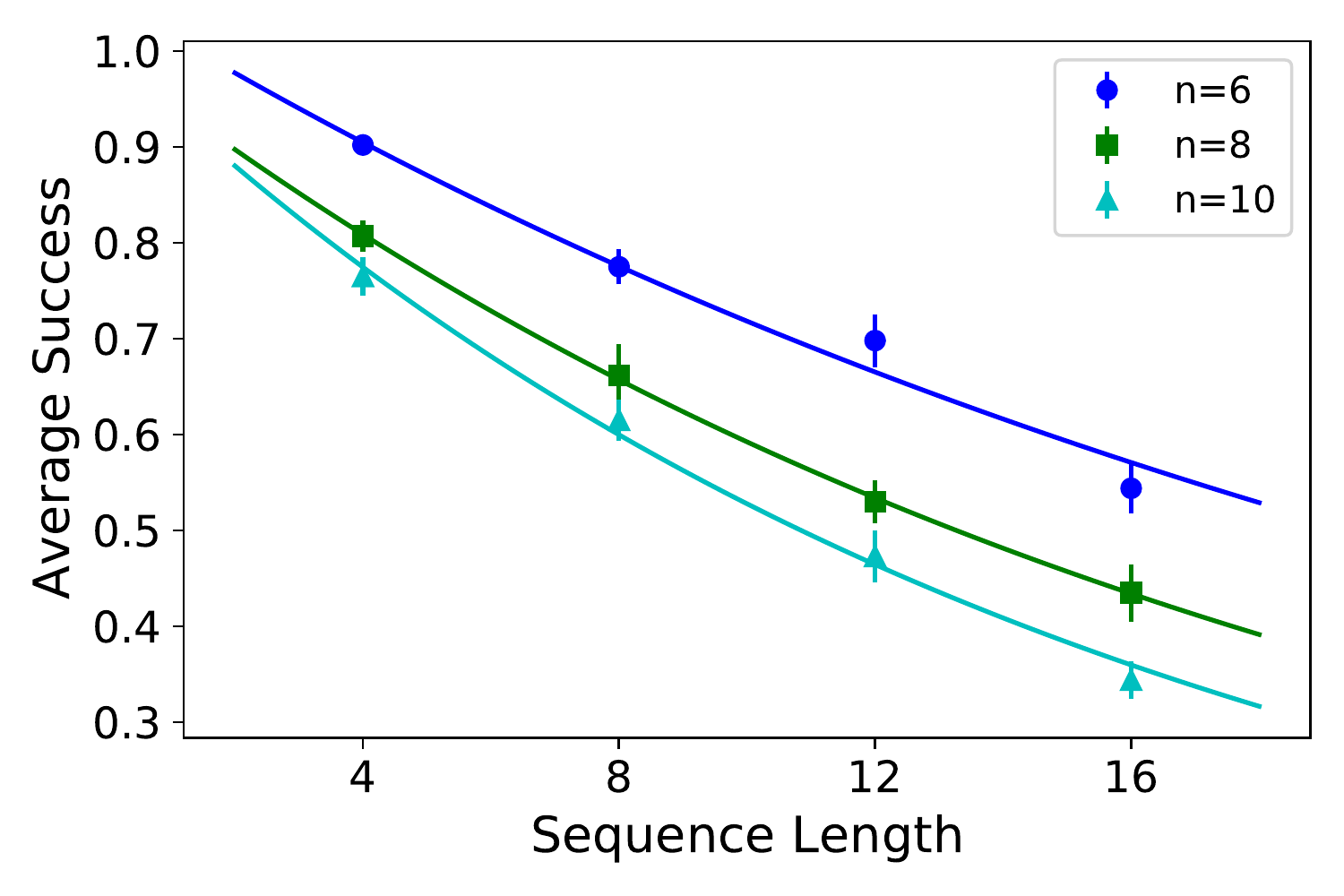}
\centering
    \begin{tabular}{ |c|c|c|c| }
    \hline
    $n$ & $6$ & $8$ & $10$\\
    \hline
    Unitarity & $0.962(3)$ & $0.949(5)$ & $0.938(4)$\\
    \hline
    $F_{lower}$ & $0.962(3)$ & $0.949(5)$ & $0.938(4)$\\
    \hline
    $F_{upper}$ & $0.981(2)$ & $0.974(3)$ & $0.969(2)$\\
    \hline
    \end{tabular}
 \caption{Mirror benchmarking data plotting $p(L)$ versus $L$
 for qubit numbers $n=6,8,10$, with
 10 circuits per sequence length,
 and 100 shots per circuit.
 The data was taken on the Honeywell System Model H1.
 The table gives the unitarity estimate obtained by
 best-fitting to $p(L)=Au^{L-1}+1/2^n$,
 as well as the lower and upper bounds on the process fidelity 
 given by Eq.~\eqref{eq: fid bounds}.}
 \label{fig: decays}
\end{figure}


To study the accuracy and precision of the unitarity estimate,
we simulate 100 different mirror benchmarking experiments
for $n=6,8,10$ with depolarizing error on the two-qubit gates.
For each simulated experiment, the depolarizing parameter
was chosen uniformly at random between $0.0$ and $1.0\times10^{-2}$.
Figure~\ref{fig: scatter} shows a scatter plot of the
estimated unitarity versus depolarizing parameter,
with the true unitarity plotted for comparison.
The true unitarity for circuit layers consisting of $n/2$ two-qubit gates with uniform depolarizing error is given
by a formula that we derive in the appendix.
For a single point, the estimation error is the difference between
the estimated and true unitarities.
The mean of the estimation errors of the simulated data points is
less than $2\times10^{-4}$, and the
standard deviations are $1.5,2.2,2.6\times10^{-3}$ for $n=6,8,10$, respectively.
This indicates that for this error model,
mirror benchmarking gives an unbiased estimate of the unitarity,
to within the numerical precision of the simulation.

\begin{figure}[ht]
\centering
\includegraphics[scale=0.55]{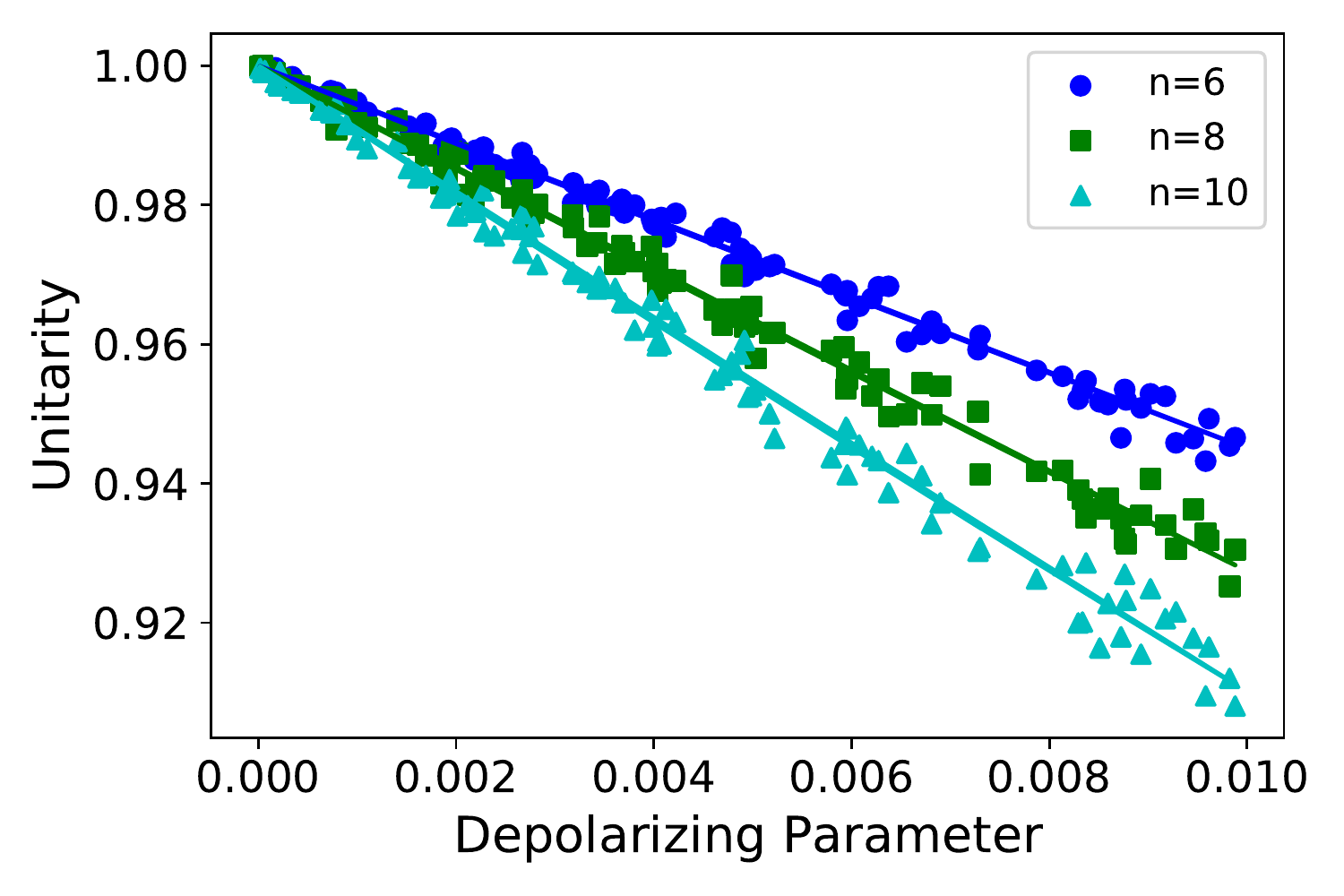}
\centering
Unitarity Estimation Error
  \begin{tabular}{ |c|c|c|c| }
     \hline
     $n$ & $6$ & $8$ & $10$\\
     \hline
     Mean & $1.2\times10^{-4}$ & $1.2\times10^{-5}$ & $1.8\times10^{-4}$\\
     \hline
     Std. & $1.5\times10^{-3}$ & $2.2\times10^{-3}$ & $2.6\times10^{-3}$\\ 
     \hline
    \end{tabular}
 \caption{ Scatter plot of unitarity estimates
 from 100 simulated mirror benchmarking experiments
 versus two-qubit depolarizing parameter.
 The solid lines plot the true unitarity.
 Each experiment used sequence lengths $L= 4, 8, 12, 16$ and $10$ circuits per sequence length.
 }
 \label{fig: scatter}
\end{figure}

\section{Discussion}\label{sec: 5}

We have shown that for mirror benchmarking with random circuits,
under some assumptions
the survival probability decays exponentially according
to the unitarity of the average error channel per layer.
This justifies the use of mirror benchmarking
as a system-level test of quantum computer performance and leads to a number of applications.
It is well known that coherent errors can cause large quantum circuits to perform worse
than would be predicted from the fidelities of
single and two-qubits gates alone~\cite{Kueng2016}.
A mirror benchmarking experiment could be performed
with and without the use of randomized compiling.
If the noise is partially coherent, then
the survival probability and extracted 
unitarity will be higher without randomized compiling,
indicating that randomized compiling should be used 
for real quantum circuits of interest.
Apart from providing a system-level benchmark,
mirror benchmarking also reduces to a protocol for
estimating the unitarity of gate errors,
when performed on just one or two qubits.
This is an alternative to the protocol in~\cite{Wallman2015},
which relies on repeating circuits over multiple
measurement settings.
We leave an exploration of the ability for mirror
benchmarking to diagnose coherent errors to a future work.

We conclude by pointing out some limitations of
our present work and open problems.
First, we proved our main result under the uniform 
noise assumption, that is, that $\cE$ is the
same error channel for every layer $g_i$.
This assumption is unrealistic but is commonly made in the research on RB,
and often works well in practice.
We speculate that this assumption
can be relaxed, as was shown for RB in~\cite{Wallman2018}.
Also, we do not present a statistical analysis
for the unitarity estimate (i.e. a confidence interval as a function of sample and circuit complexity) in this work.
We believe that methods from~\cite{Wallman2014} and~\cite{Harper2019} can be adapted to our
setting,
but this is an open problem.
Finally, we remark that the layer unitaries
in mirror benchmarking in general can be chosen to be whatever one wants.
In particular, one could choose Haar random elements of $SU(4)$ for each qubit pair to define a mirrored version
of the quantum volume (QV) test~\cite{Cross2019}. 
This would solve the problem that QV requires classical simulation and is therefore fundamentally unscalable.
We wonder what the relationship is between the survival
probability and the heavy outcome probability in ensembles of mirror benchmarking and QV circuits, respectively,
and whether mirror benchmarking can be used to extrapolate QV 
beyond the regime of classical simulatability. 

\acknowledgements
We acknowledge Timothy Proctor for a helpful discussion.
We thank Charlie Baldwin and David Hayes and for comments and suggestions.
We thank the entire experimental team at Honeywell Quantum Solutions for making this work possible.

\nocite{*}

\bibliography{library}

\appendix

\section{Non-unital error channels}

In Sec.~\ref{sec: 2}, we assumed that $\cE$ is unital, but we can relax this 
assumption and recover the result of Th.~1 as follows.
As a CPTP map, $\cE$ can be written as $\cE=\Pi_1 + \cE_n+\cE_u$,
where $\cE_n=\Pi_2\cE\Pi_1$ and $\cE_u=\Pi_2\cE\Pi_2$ are
the non-unital and unital parts of $\cE$, respectively.
If $\cE$ has a non-unital part, then $\cE^{\dagger}$ is not TP. 
In this case, we make the assumption that the error channel
following each inverse layer $g_i^{-1}$ is given by
$\cE':=\Pi_1+\cE_n+\cE_u^{\dagger}$.
That is, the error channel has the same
non-unital part but has a unital part that is dual to the unital part of $\cE$.
The motivation for this assumption is that we expect control errors 
on gates (such as over-rotations) to be inverse to
the control errors on the corresponding inverse gates.
Non-unital errors (such as amplitude damping channels), 
however, are expected to be constant between the front half
and back half of the mirrored circuit.
By replacing $\cE^{\dag}$ with $\cE'$ in Sec.~\ref{sec: 2}
we recover Th.~1.
\newline

\section{Bounds on process fidelity from unitary}

Let $\cE_{ij}=\frac{1}{d}\Tr(P_i \cE(P_j))$
be the matrix elements of the error channel in the Pauli basis of the superoperator representation,
with the convention that $P_0=I$.
The unitarity is given by
\begin{equation}
    u = \frac{1}{D}\sum_{i,j>0} \cE_{ij}^2 \ge \frac{1}{D}\sum_{i>0} \cE_{ii}^2.
\end{equation}
The latter inequality is saturated when $\cE$ is a
stochastic Pauli channel.
Meanwhile, the process fidelity is
\begin{equation}
    F(\cE) = \frac{1}{d^2}\big(1+\sum_{i>0}\cE_{ii}\big).
\end{equation}
From the previous two equations, upper and lower bounds on $F(\cE)$
for stochastic Pauli channels can be obtained.
On one hand, since each $\cE_{ii}\in[0,1]$,
\begin{equation}
    \sum_{i>0} \cE_{ii}^2\le \sum_{i>0}\cE_{ii}.
\end{equation}
On the other hand, minimizing $\sum_{i>0}\cE_{ii}^2$
under the constraint that $\sum_{i>0}\cE_{ii}=c$ gives
$\cE_{ii}=c/D$ for all $i$,
at which point $\sum_{i>0}\cE_{ii}^2=c^2/D$. Therefore,
\begin{equation}
    \frac{1}{D}\big(\sum_{i>0}\cE_{ii}\big)^2\le\sum_{i>0} \cE_{ii}^2.
\end{equation}
Using these two inequalities and
substituting the expressions for $u$ and $F(\cE)$,
after a little algebra we obtain
\begin{equation}
    \frac{1}{d^2}\big(1+D\,u\big)\le F(\cE)\le \frac{1}{d^2}\big(1+D\sqrt{u}\big).
\end{equation}

\section{Unitarity of tensor-product of depolarizing channels}

We derive a formula for the unitarity of $n/2$ parallel 
two-qubit depolarizing channels, with $n$ even.
For a system of dimension $d$, a depolarizing channel
with parameter $p$ acts on a linear operator $X$ as
\begin{equation} 
    \cD(X) = (1-p)X + p\Tr(X)I/d.
\end{equation}
Now fix $d=4$, corresponding to a two-qubit system, and let $N=n/2$.
We are interested in the tensor product channel $\cD^{\otimes N}$.
For $n$-qubit Pauli operator $P=P_1P_2\cdots P_N$, where $P_i$ 
is a two-qubit Pauli operator acting on the $i$-th qubit pair, we have

\begin{widetext}
\begin{align}
\cD^{\otimes N}(P_1\cdots P_N) &= \bigotimes_{i=1}^N\bigg((1-p)P_i + \frac{p}{4}\Tr(P_i)I\bigg)\notag\\
&= \Big( (1-p)^N + (1-p)^{N-1} p\sum_i\delta_{P_i,I} + (1-p)^{N-2}p^2 \sum_{i<j}\delta_{P_i,I}\delta_{P_j,I} + \dots\Big)P_1\cdots P_N\notag\\
&= \Big(\sum_{j=0}^{N-w(P)}\binom{N-w(P)}{j}(1-p)^{N-j}p^j\Big)P_1\cdots P_N,
\end{align}
\end{widetext}
where in the second line we used $\Tr(P_i)=4\delta_{P_i,I}$,
and where $w(P)=|\{i:P_i\ne I\}|$ is the weight of $P$ with respect to the $N$ two-qubit subsystems.
The unitarity is given by
\begin{equation}
    u(\cD^{\otimes N}) = \frac{1}{(4^N-1)4^N}\sum_{P\ne I}\Tr\big(P\cD^{\otimes N}(P)\big)^2. 
\end{equation}

We compute the sum over $P$ by first summing over the weights of $P$.
For a given $w$, there are $15^w\binom{N}{w}$ Paulis of weight $w$,
since there are $\binom{N}{w}$ choices of $w$ subsystems, and $(4^2-1)=15$
different Paulis per subsystem satisfying $P_i\ne I$.
Therefore,

\begin{widetext}
\begin{equation}
    u(\cD^{\otimes N}) = \frac{1}{4^N-1}\sum_{w=1}^N 15^w\binom{N}{w}\Bigg(\sum_{j=0}^{N-w}\binom{N-w}{j}(1-p)^{N-j}p^j\Bigg)^2 .
\end{equation}
This is the equation for the true unitarity that is
plotted in Fig.~\ref{fig: scatter}.
\end{widetext}

\end{document}